\theoremstyle{plain}\newtheorem{fact}[thm]{Fact}
\newcommand{\ADV}{\textsc{Adv}}
\newcommand{\MG}{\textsc{MG}}
\newcommand{\RG}{\textsc{RG}}
\newcommand{\OPT}{\textsc{Opt}}
\newcommand{\ALG}{\textsc{Alg}}
\newcommand{\GALG}{\mathcal{G}_\mathrm{ALG}}
\newcommand{\GRG}{\mathcal{G}_\mathrm{RG}}
\newcommand{\GADV}{\mathcal{G}_\mathrm{ADV}}
\newcommand{\GOPT}{\mathcal{G}_\mathrm{OPT}}
\newcommand{\R}{\mathcal{R}}
\newcommand{\E}{\mathbf{E}}
\begin{document}

\title[Randomized algorithm for agreeable deadlines packet scheduling]%
{A \(\frac{4}{3}\)-competitive randomized algorithm for online scheduling
of packets with agreeable deadlines}

\author[lab1]{{\L}. Je{\.z}}{\L{}ukasz Je\.z}
\address{University of Wroc\l{}aw, Institute of Computer Science}
\email{lje@cs.uni.wroc.pl}
\urladdr{http://www.ii.uni.wroc.pl/~lje/}
\thanks{This work has been supported by MNiSW grants no. N~N206~1723~33, 2007--2010 and N~N206~490638, 2010--2011}

\keywords{online algorithms, scheduling, buffer management}
\subjclass{F.2.2 [\textbf{Analysis of Algorithms and Problem Complexity}]: Nonnumerical Algorithms and Problems}

\begin{abstract}
In 2005 Li~et~al. gave a \(\phi\)-competitive deterministic
online algorithm for scheduling of packets with agreeable
deadlines~\cite{DBLP:conf/soda/LiSS05} with a very interesting analysis.
This is known to be optimal due to a lower bound by Hajek~\cite{Hajek-det-lb}.
We claim that the algorithm by Li~et~al. can be slightly simplified, while retaining its
competitive ratio. Then we introduce randomness to the modified algorithm and argue
that the competitive ratio against oblivious adversary is at most \(\frac{4}{3}\).
Note that this still leaves a gap between the best known lower bound of \(\frac{5}{4}\)
by Chin~et~al.~\cite{DBLP:journals/algorithmica/ChinF03} for randomized algorithms against
oblivious adversary.
\end{abstract}

\maketitle

\section{Introduction}

We consider the problem of {\em buffer management with bounded delay}
(aka \emph{packet scheduling}), introduced by
Kesselman et al.~\cite{DBLP:journals/siamcomp/KesselmanLMPSS04}.
It models the behaviour of a~single network switch.
We assume that time is slotted and divided into steps. At the beginning of
a time step, any number of packets may 
arrive at a switch and are stored in its {\em buffer}.
A packet has a positive weight and a deadline, which is the number of step right
before which the packet expires: unless it has already been
transmitted, it is removed from the buffer at the very beginning of that step
and thus can no longer be transmitted.
Only one packet can be transmitted in a single step. 
The goal is to maximize the \emph{weighted throughput}, i.e., the
total weight of transmitted packets.

As the process of managing packet queue is inherently a real-time task,
we investigate the online variant of the problem. This means that the algorithm
has to base its decision of which packet to transmit solely on the 
packets which have already arrived at a switch, without the knowledge
of the future.

%
%

\subsection{Competitive Analysis.}

To measure the performance of an online algorithm, we use a standard notion
of competitive analysis~\cite{online-book}, which compares the gain of
the algorithm to the gain of the optimal solution on the same input sequence.
For any algorithm $\ALG$, we denote its gain on the input sequence $I$ by
$\GALG(I)$; we denote the optimal offline algorithm by $\OPT$. We say that
a deterministic algorithm $\ALG$ is $\R$-competitive if on any input
sequence~$I$,
it holds that $\GALG(I) \geq \frac 1 \R \cdot \GOPT(I)$. 

When analysing the performance of an online algorithm $\ALG$, we view the process
as a game between $\ALG$ and an {\em adversary}. The adversary controls
the packets' injection into the buffer and chooses which of them to send.
The goal is then to show that the adversary's gain is at most $\R$ times
$\ALG$'s gain.

If the algorithm is randomized, we consider its expected gain, $\E[\GALG(I)]$, where the 
expectation is taken over all possible random choices made by $\ALG$. 
However, in the randomized case, the power of the adversary has to be further specified.
Following Ben-David et al.~\cite{DBLP:journals/algorithmica/Ben-DavidBKTW94},
we distinguish between an {\em oblivious} and {\em adaptive-online} adversary
(called adaptive for short). An oblivious adversary has to construct the whole input sequence
in advance,  not knowing the random bits used by an algorithm. The expected gain
of $\ALG$ is compared to the gain of the optimal offline solution on $I$. 
An adaptive adversary decides packet injections
upon seeing which packets are transmitted by the algorithm.
However, it has to provide an answering entity~$\ADV$, which creates
a solution on-line (in parallel to $\ALG$) and cannot change it afterwards.
We say that $\ALG$ is $\R$-competitive against an adaptive adversary if for any input sequence $I$
created adaptively and any answering algorithm $\ADV$, it holds that 
$\E[\GALG(I)] \geq \frac{1}{\R} \cdot \E[\GADV(I)]$. We note that $\ADV$ is (wlog) deterministic, 
but as $\ALG$ is randomized, so is the input sequence $I$.

In the literature on online algorithms (see e.g.~\cite{online-book}), 
the definition of the competitive ratio sometimes allows an additive constant,
i.e., a~deterministic algorithm is $\R$-competitive if there exists a constant $\alpha \geq 0$
such that $\GALG(I) \geq \frac 1 \R \cdot \GOPT(I) - \alpha$ holds for evry input sequence~$I$.
An analogous definition applies to randomized case. Our upper bounds hold for $\alpha = 0$.

\subsection{Previous work}

The best known deterministic and randomized algorithms for general
instances have competitive ratios at most
\(2\sqrt{2}-1\approx 1.828\)~\cite{DBLP:conf/soda/EnglertW07} and
\(e/(e-1)\approx 1.582\)~\cite{DBLP:journals/jda/ChinCFJST06},
respectively. A recent analysis of the latter algorithm shows that it
retains its competitive ratio even against adaptive-online
adversary~\cite{DBLP:journals/corr/abs-0907-2050}.

The best known lower bounds on competitive ratio against either adversary type
use rather restricted \emph{\(2\)-bounded} sequences in which every packet has
lifespan (deadline~$-$~release time) either \(1\) or \(2\). The lower bounds in
question are \(\phi\approx 1.618\) for deterministic algorithms~\cite{Hajek-det-lb},
\(\frac{4}{3}\) for randomized algorithms against adaptive
adversary~\cite{DBLP:conf/waoa/BienkowskiCJ08},
and \(\frac{5}{4}\) for randomized algorithms against oblivious
adversary~\cite{DBLP:journals/algorithmica/ChinF03}.
All these bounds are tight for \(2\)-bounded
sequences~\cite{DBLP:journals/siamcomp/KesselmanLMPSS04,%
DBLP:conf/waoa/BienkowskiCJ08,DBLP:journals/jda/ChinCFJST06}.

We restrict ourselves to sequences with \emph{agreeable deadlines}, in which
packets released later have deadlines at least as large as those released before
(\(r_i < r_j\) implies \(d_i \leq d_j\)). These strictly generalize the
\(2\)-bounded sequences. Sequences with agreeable deadlines also properly contain
\emph{s-uniform} sequences for all \(s\), i.e., sequences in which every packet has
lifespan exactly \(s\). An optimal \(\phi\)-competitive deterministic
algorithm for sequences with agreeable deadlines is known~\cite{DBLP:conf/soda/LiSS05}.

Je\.zabek studied the impact of \emph{resource augmentation} on the deterministic competitive
ratio~\cite{DBLP:conf/sofsem/Jezabek09,Jezabek-Agreeable}. It turns out that while
allowing the deterministic algorithm to transmit \(k\) packets in a single step for
any constant \(k\) cannot make it \(1\)-competitive (compared to the single-speed
offline optimum) on unrestricted sequences~\cite{DBLP:conf/sofsem/Jezabek09},
\(k=2\) is sufficient for sequences with agreeable deadlines~\cite{Jezabek-Agreeable}.

\subsection{Our contribution}

Motivated by aforementioned results for sequences with agreeable deadlines,
we investigate randomized algorithms for such instances. We devise a
\(\frac{4}{3}\)-competitive randomized algorithm against oblivious adversary.
The algorithm and its analysis are inspired by those by
Li~et~al.~\cite{DBLP:conf/soda/LiSS05} for deterministic case.
The key insight is as follows. The algorithm {\MG} by Li~et~al.~\cite{DBLP:conf/soda/LiSS05}
can be simplified by making it always send either \(e\), the heaviest among the
earliest non-dominated packets, or \(h\), the earliest among the heaviest non-dominated packets.
We call this algorithm \(\MG'\), and prove that it remains \(\phi\)-competitive.
Then we turn it into a randomized algorithm {\RG}, simply by making it always transmit \(e\)
with probability \(\frac{w_e}{w_h}\) and \(h\) with the remaining probability. The proof
of {\RG}'s \(\frac{4}{3}\)-competitiveness against oblivious adversary follows by
similar analysis.

\section{Preliminaries}

We denote the release time, weight, and deadline of a packet \(j\) by \(r_j\),
\(w_j\), and \(d_j\), respectively. A packet \(j\) is \emph{pending} at step
\(t\) if \(r_j \leq t\), it has not yet been transmitted, and \(d_j>t\). We
introduce a linear order \(\unlhd\) on the packets as follows: \(i\unlhd j\)
if either
\begin{align*}
	&d_i<d_j\enspace\text{, or}\\
	&d_i=d_j\enspace\text{and }w_i>w_j\enspace\text{, or}\\
	&d_i=d_j\enspace\text{and }w_i=w_j\enspace\text{and }r_i \leq r_j\enspace.
\end{align*}
To make \(\unlhd\) truly linear we assume
that in every single step the packets are released one after another rather
then all at once, e.g. that they have unique fractional release times.

A \emph{schedule} is a mapping from time steps to packets to be transmitted in
those time steps. A schedule is \emph{feasible} if it is injective and for every
time step \(t\) the packet that \(t\) maps to is pending at~\(t\).
It is convenient to view a feasible schedule \(S\)
differently, for example as the set \(\{S(t)\colon t>0\}\),
the sequence \(S(1),S(2),\ldots\),
or a matching in the \emph{schedulability graph}. The schedulability graph is
a bipartite graph, one of whose partition classes 
is the set of packets and the other is the set of time steps. Each packet \(j\) is
connected precisely to each of the time steps \(t\) such that \(r_j \leq t < d_j\)
by an edge of weight \(w_j\); an example is given in Figure~\ref{fig:sch-graph}.
Observe that optimal offline schedules correspond to maximum weight matchings
in the schedulability graph. Thus an optimal offline schedule can be found in
polynomial time using the classic ``Hungarian algorithm'', see for
example~\cite{comb-opt-book}.
One may have to remove appropriately chosen time step vertices first, so that the
remaining ones match the number of packet vertices, though.

\begin{figure}[h]
	\begin{center}
		\includegraphics[width=0.64\textwidth]{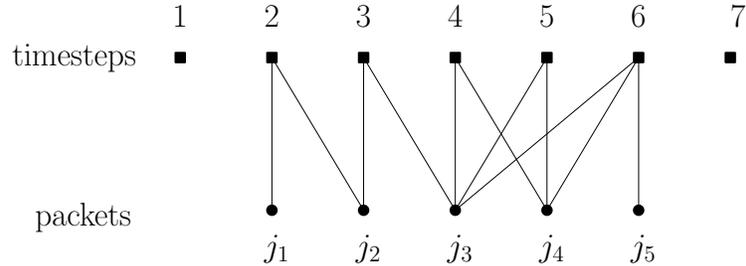}
	\end{center}
	\caption{Schedulability graph for packets \(j_1,j_2,\ldots,j_5\), whose release times
		and deadlines are (2,3), (2,4), (3,7), (4,7), (6,7) respectively;
		we ignore packet weights in the figure.
		Packets are represented by discs, time steps by squares.}
    \label{fig:sch-graph}
\end{figure}

Given any linear order \(\preceq\) on packets and a (feasible) schedule \(S\),
 we say that \(S\) is consistent with \(\preceq\), or that \(S\) is a \(\preceq\)-schedule,
 if for every \(t\) the packet \(S(t)\) is the minimum pending packet with respect
to \(\preceq\). It is fairly easy to observe that if \(\preceq\) is any
\emph{earliest deadline first}, with ties broken in an arbitrary way,
then any feasible schedule can be turned to a unique \(\preceq\)-schedule by reordering
its packets; in particular this applies to \(\unlhd\).

Recall that the \emph{oblivious} adversary prepares the whole input sequence in
advance and cannot alter it later on. Thus its solution is simply the offline
optimal schedule for the complete sequence. Nevertheless, we still refer to
the answering entity {\ADV} rather than {\OPT} in our analysis, as it involves
altering the set of packets pending for the adversary, which may well be viewed
as altering the input sequence. Now we introduce two schedules that
are crucial for our algorithms and our analyzes.
\begin{definition}
The \emph{oblivious schedule} at time step \(t\), denoted \(O_t\), is any
fixed optimal feasible \(\unlhd\)-schedule over all the packets pending
at step \(t\). For fixed~\(O_t\), a~packet~\(j\) pending at~\(t\) is called
\emph{dominated} if \(j \notin O_t\), and \emph{non-dominated} otherwise. For
fixed~\(O_t\) let \(e\) denote \(O_t(t)\), the \(\unlhd\)-minimal of all non-dominated
packets, and \(h\) denote the \(\unlhd\)-minimal of all non-dominated maximum-weight
packets.
\end{definition}
Note that both the adversary and the algorithm can calculate their oblivious
schedules at any step, and that these will coincide if their buffers are the same.
\begin{definition}
For a fixed input sequence, the \emph{clairvoyant schedule} at time step \(t\),
denoted \(C_t\), is any fixed optimal feasible schedule over all the packets
pending at step \(t\) and all the packets that will arrive in the future.
\end{definition}
Naturally, the adversary can calculate the clairvoyant schedule, as it knows
the fixed input sequence, while the algorithm cannot, since it only knows the
part of input revealed so far. However, the oblivious schedule gives some partial
information about the clairvoyant schedule: intuitively, if \(p\) is dominated at
\(t\), it makes no sense to transmit it at \(t\). Formally, (wlog) dominated
packets are not included in the clairvoyant schedule, as stated in the following.
\begin{fact}\label{fact:O-contains-C}
For any fixed input sequence, time step \(t\), and oblivious schedule~\(O_t\),
there is a clairvoyant schedule~\(C^*_t\) such that
\(C^*_t\cap\{j\colon r_j \leq t\} \subseteq O_t\).
\end{fact}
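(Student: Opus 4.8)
The plan is to prove this by an exchange argument on matchings in the schedulability graph. Recall that feasible schedules of a fixed set of packets correspond to matchings in the associated schedulability graph, with weight-optimal schedules corresponding to maximum-weight matchings; in particular, since \(O_t\) is an optimal feasible schedule over the packets pending at~\(t\), it is a maximum-weight matching \(M^\ast\) in the bipartite graph \(H_t\) whose two sides are the packets pending at~\(t\) and the time steps \(\ge t\). (The \(\unlhd\)-consistency of \(O_t\) will not be used, only its optimality.) First I would fix, among all clairvoyant schedules of maximum weight, one schedule \(C\) that minimizes the number of packets \(j\) with \(r_j\le t\) and \(j\notin O_t\) that it transmits; call such packets the \emph{bad} packets of \(C\). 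It suffices to show that \(C\) has no bad packet, for then \(C^\ast_t:=C\) proves the claim. So suppose for contradiction that \(p\) is a bad packet of~\(C\). Restricting \(C\) to packets released by~\(t\) yields a matching \(C_P\) in \(H_t\) (each such packet is pending at~\(t\), is sent at a step \(\ge t\), and feasibility places the corresponding edge in \(H_t\)), and \(p\in C_P\setminus M^\ast\). Thus \(p\) has degree one in the symmetric difference \(C_P\triangle M^\ast\), so the connected component \(Q\) of \(p\) in \(C_P\triangle M^\ast\) is a simple path with \(p\) as one endpoint, alternating between edges of \(C_P\) and edges of \(M^\ast\) and starting with the \(C_P\)-edge at \(p\).

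The key step is to show that \(Q\) cannot end at a time-step vertex. If it did, then both endpoints of \(Q\) would be unmatched by \(M^\ast\) (the time step because the path stops there, and \(p\) because it is bad), so \(M^\ast\triangle Q\) would again be a valid matching in \(H_t\); moreover, along \(Q\) the \(C_P\)-edges carry exactly the same total weight as the \(M^\ast\)-edges plus the extra weight \(w_p\) of the edge at~\(p\), so \(w(M^\ast\triangle Q)=w(M^\ast)+w_p>w(M^\ast)\), contradicting the maximality of \(M^\ast\). Hence \(Q\) terminates at a packet vertex~\(q\), which is reached along an \(M^\ast\)-edge and has no further \(C_P\)-edge, i.e.\ \(q\in M^\ast\setminus C_P\): \(q\) is non-dominated and unused by~\(C\).

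Finally, I would flip \(Q\) inside \(C\): set \(C^\ast_t:=(C\setminus C_P)\cup(C_P\triangle Q)\). Since \(Q\) ends at a packet, every time step met by \(Q\) is incident to exactly one \(C_P\)-edge and one \(M^\ast\)-edge of \(Q\), so \(C_P\triangle Q\) occupies exactly the same set of time steps as \(C_P\); therefore \(C^\ast_t\) is again a feasible schedule, still clairvoyant, and the set of its packets released by~\(t\) is \((C_P\setminus\{p\})\cup\{q\}\). A weight bookkeeping along \(Q\) identical to the one above gives \(w(C^\ast_t)=w(C)-w_p+w_q\), and \(w_q\ge w_p\) because \(M^\ast\triangle Q\) (in this case a matching of the same size as \(M^\ast\)) has weight \(w(M^\ast)+w_p-w_q\le w(M^\ast)\). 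Hence \(C^\ast_t\) is again a maximum-weight clairvoyant schedule, but with one fewer bad packet than~\(C\), contradicting the choice of~\(C\). This forces \(C\) to have no bad packet, which is exactly the statement.

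The main obstacle is the middle step — ruling out that \(Q\) ends at a time step — since that is where the maximality of \(M^\ast\) (equivalently, the absence of a positive-weight augmenting path) is used in an essential way; the only other point needing care is checking that flipping \(Q\) leaves the global clairvoyant schedule feasible, which reduces to the observation that \(Q\) terminates at a packet rather than a time step and hence does not disturb which time steps are occupied.
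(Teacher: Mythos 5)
Your proof is correct and follows essentially the same route as the paper's: both interpret the schedules as matchings in the schedulability graph, follow the alternating path from a pending packet that lies in the clairvoyant schedule but not in \(O_t\), rule out the path ending at a time step using the maximality of \(O_t\), and then flip the path to remove one such packet without losing weight. The only cosmetic difference is that you phrase the induction via an extremal choice (a maximum-weight clairvoyant schedule with the fewest ``bad'' packets), whereas the paper iterates the exchange and bounds the number of iterations with the potential \(\Delta(S)=|S\cap\{j:r_j\le t\}\setminus O_t|\); these are equivalent.
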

\begin{proof}
This is a standard alternating path argument about matchings. If you are unfamiliar
with these concepts, refer to a book by A.~Schrijver~\cite{comb-opt-book} for example.

Let \(O_t\) be the oblivious schedule and \(C_t\) be any clairvoyant schedule.
Treat both as matchings in the schedulability graph and consider their
symmetric difference \(C_t \oplus O_t\). Consider any job \(j \in C_t \setminus O_t\)
such that \(r_j \leq t\). It is an endpoint of an alternating path \(P\) in
\(C_t \oplus O_t\). Note that all the jobs on \(P\) are already pending at time \(t\):
this is certainly true about \(j\), and all the successive jobs belong to \(O_t\), so
they are pending as well.

First we prove that \(P\) has even length, i.e., it ends in a node corresponding to a job.
Assume for contradiction that \(P\)'s length is odd, and that \(P\) ends in a node
corresponding to a timestep \(t'\). Note that no job is assigned to \(t'\) in~\(O_t\).
Then \(O_t \oplus P\) is a feasible schedule that, treated as a set, satisfies
\(O_t \subseteq O_t \oplus P\) and \(j \in O_t \oplus P\). This contradicts optimality
of~\(O_t\). See Figure~\ref{fig:odd-path} for illustration.

Thus \(P\) has even length and ends with a job \(j' \in O_t \setminus C_t\).
By optimality of both \(O_t\) and \(C_t\), \(w_j=w_{j'}\) holds.
Thus \(C_t \oplus P\) is an optimal feasible schedule: in terms of sets the only
difference between \(C_t\) and \(C_t \oplus P\) is that \(j\) has been replaced
by \(j'\), a job of the same weight. See Figure~\ref{fig:even-path} for illustration.

\begin{figure}[h]%
	\centering
	\subfloat[%
		\(P\) cannot have odd length: in such case the assignment of jobs on \(P\)
		in \(O_t\) could be changed to match the strictly better assignment of \(C_t\).%
	]{%
		\label{fig:odd-path}\includegraphics[width=0.41\textwidth]{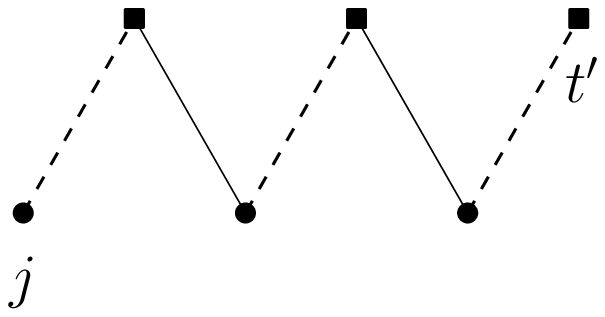}
	}
	\qquad
	\subfloat[%
		\(P\) has even length: now the assignment of jobs on \(P\) in \(C_t\) can be
		changed to match the assignment of \(O_t\) so that the value of \(\Delta\) drops.%
	]{%
		\label{fig:even-path}\includegraphics[width=0.495\textwidth]{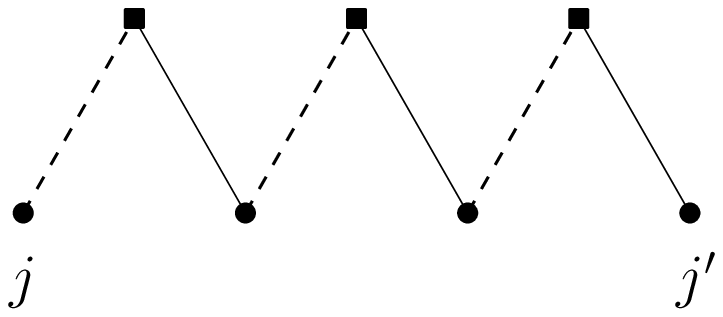}
	}
  \caption{The alternating path \(P\). Packets are represented by discs, time steps by squares.
		Dashed lines represent \(C_t\), solid lines represent \(O_t\).}
	%
\end{figure}

Applying such changes iteratively transforms \(C_t\) to
a clairvoyant schedule~\(C^*_t\) as announced.
To observe that a finite number of iterations suffices, define
\(\Delta(S):=|S\cap\{j\colon r_j \leq t\} \setminus O_t|\) for any schedule \(S\).
It follows that \(\Delta(C_t \oplus P) = \Delta(C_t)-1\).
Since \(\Delta\) is non-negative and its value drops by one with each iteration,
\(C^*_t\) is obtained in a finite number of steps.
\end{proof}
\begin{definition}\label{dfn:conforming-schedules}
We say that a clairvoyant schedule \(C_t\) conforms with an oblivious schedule~\(O_t\)
if \(C_t\) is a~\(\unlhd\)-schedule, \(C_t\cap\{j\colon r_j \leq t\} \subseteq O_t\),
and for all \(i \in O_t\) such that \(i \lhd j=C_t(t)\), \(w_i < w_j\) holds.
\end{definition}
\begin{fact}\label{fact:adv-takes-not-too-late}
For every oblivious schedule \(O_t\) there is a conforming clairvoyant schedule \(C^*_t\).
\end{fact}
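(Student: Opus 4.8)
The plan is to start from the clairvoyant schedule $C^*_t$ provided by Fact~\ref{fact:O-contains-C}, for which $C^*_t\cap\{j\colon r_j\le t\}\subseteq O_t$, and to repair it into one meeting all three requirements of Definition~\ref{dfn:conforming-schedules}. Since $\unlhd$ is an earliest-deadline-first order, we may, as observed earlier, reorder the packets of $C^*_t$ into a $\unlhd$-schedule; this alters neither the set of transmitted packets (so the inclusion in $O_t$ is preserved) nor the total weight (so it stays optimal, hence clairvoyant). Recall also that these schedules concern only the steps $t,t+1,\dots$, so that the packet placed at step~$t$ is the one transmitted first.

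Suppose the reordered $C^*_t$ still violates the last condition: some non-dominated packet $i\lhd j:=C^*_t(t)$ has $w_i\ge w_j$. First, $i\notin C^*_t$: since $i\in O_t$, packet $i$ is pending at step~$t$, so if $i$ were in the $\unlhd$-schedule $C^*_t$ then $C^*_t(t)$, being the $\unlhd$-minimal packet of $C^*_t$ pending at~$t$, would satisfy $C^*_t(t)\unlhd i\lhd j$, contradicting $C^*_t(t)=j$. Now view $C^*_t$ as a matching in the schedulability graph and replace the edge from $j$ to step~$t$ by the edge from $i$ to step~$t$ (which exists since $r_i\le t<d_i$, as $i\in O_t$). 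The resulting schedule $C'$ is still feasible (a matching); its total weight is that of $C^*_t$ with $w_j$ removed and $w_i\ge w_j$ added, so it does not decrease and $C'$ is again optimal, i.e.\ clairvoyant; and $C'\cap\{p\colon r_p\le t\}\subseteq O_t$ still holds, since we only deleted $j$ and inserted $i\in O_t$. Reorder $C'$ into a $\unlhd$-schedule: because $i\in C'$ is pending at~$t$, its first packet $C'(t)$ satisfies $C'(t)\unlhd i\lhd j$, strictly $\unlhd$-smaller than $j$.

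Iterating this repair, the $\unlhd$-rank of the packet transmitted first strictly decreases at each step while the first two conditions of Definition~\ref{dfn:conforming-schedules} stay satisfied; since there are finitely many packets the process terminates, and the final clairvoyant schedule conforms with $O_t$. (In fact a single, well-chosen swap suffices: if $m$ is the largest weight of a non-dominated packet that is not $\unlhd$-after $C^*_t(t)$ and $j'$ is the $\unlhd$-minimal such packet of weight~$m$, then replacing $C^*_t(t)$ by $j'$ yields a conforming schedule directly, since every non-dominated packet $\unlhd$-before $j'$ then has weight strictly below $w_{j'}=m$.)

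I expect the main obstacle to be the bookkeeping around the swap rather than any deep idea: one must check that replacing a single edge of a maximum-weight matching keeps it a matching of maximum weight (the weight does not drop), keeps it $O_t$-bounded, and that the incoming packet is not already present — routine alternating-path/matching reasoning in the spirit of the proof of Fact~\ref{fact:O-contains-C}, but it has to be done carefully because the conformity conditions pin down the specific packet sent at step~$t$.
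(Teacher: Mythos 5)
Your proof is correct and follows essentially the same approach as the paper's: start from the schedule of Fact~\ref{fact:O-contains-C}, reorder it into a \(\unlhd\)-schedule, then replace the first packet by a \(\unlhd\)-smaller non-dominated packet of equal weight. The paper carries out the replacement in one shot---taking the \(\unlhd\)-minimal non-dominated packet of weight \(w_{C'_t(t)}\), which is exactly the \(j'\) of your parenthetical remark---whereas your main argument iterates single swaps; the two coincide because optimality forces any candidate \(i\lhd C^*_t(t)\) with \(w_i\ge w_{C^*_t(t)}\) to in fact have \(w_i = w_{C^*_t(t)}\).
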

\begin{proof}
Let \(C_t\) be a clairvoyant schedule such that \(C_t\cap\{j\colon r_j \leq t\} \subseteq O_t\);
Fact~\ref{fact:O-contains-C} guarantees its existence. Let \(C^*_t\) be the schedule
obtained from \(C_t\) by first turning it into a \(\unlhd\)-schedule \(C'_t\) and then
replacing \(j=C'_t(t)\) with a \(\unlhd\)-minimal non-dominated packet \(j'\)
of the same weight.

If \(j'=j\), then \(C^*_t=C'_t\), and thus it is a clairvoyant \(\unlhd\)-schedule.
Assume \(j' \neq j\), i.e., \(j' \lhd j\). Then \(j' \notin C'_t\), since \(C'_t\) is a \(\unlhd\)-schedule.
Thus \(C^*_t\) is feasible as we replace \(C'_t\)'s very first packet by another pending packet
which was not included in \(C_t\). Observe that \(C^*_t\) is indeed a clairvoyant \(\unlhd\)-schedule:
optimality follows from \(w_{j'}=w_j\), while consistency with \(\unlhd\) follows from \(j' \lhd j\).

It remains to prove that for every \(i \in O_t\) such that \(i \lhd j'\), \(w_i < w_{j'}=w_j\) holds.
Note that \(i \notin C^*_t\) as \(C^*_t\) is a \(\unlhd\)-schedule, and that \(w_i \neq w_{j'} = w_j\)
holds by the choice of \(j'\). Assume for contradiction that \(w_i > w_j\). Then \(C^*_t\) with \(j\)
replaced by \(i\) is a feasible schedule contradicting optimality of \(C^*_t\).
\end{proof}
Now we inspect some properties of conforming schedules.
\begin{fact}\label{fact:adv-takes-sth-earlier}
Let \(C_t\) be a clairvoyant schedule conforming with an oblivious schedule~\(O_t\).
If \(i,j \in O_t\), \(w_i<w_j\) and \(d_i<d_j\) (or, equivalently\, \(w_i<w_j\) and \(i\lhd j\)),
and \(i \in C_t\), then also \(j \in C_t\).
\end{fact}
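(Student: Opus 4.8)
The proof I have in mind is a one‑step exchange argument that only relies on optimality of the clairvoyant schedule. Suppose, towards a contradiction, that \(i \in C_t\) while \(j \notin C_t\). Since \(i \in O_t\), the packet \(i\) is pending at step \(t\), so in the clairvoyant schedule \(C_t\) (which concerns only time steps \(\geq t\)) there is a time step \(t_i\) with \(t \leq t_i < d_i\) and \(C_t(t_i)=i\).

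Now I would swap \(j\) in for \(i\) at \(t_i\). Because \(j \in O_t\) is pending at \(t\), we have \(r_j \leq t \leq t_i\), and because \(d_i < d_j\) we have \(t_i < d_i < d_j\); hence \(j\) is pending at \(t_i\) and the edge \(\{j,t_i\}\) is present in the schedulability graph. Viewing \(C_t\) as a matching, delete \(\{i,t_i\}\) and insert \(\{j,t_i\}\). This is again a matching, since \(j \notin C_t\) means the packet vertex \(j\) was unmatched, so the result \(C'_t\) is a feasible schedule.

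Finally, \(C'_t\) differs from \(C_t\) only in that \(i\) has been replaced by \(j\), so its total weight exceeds that of \(C_t\) by \(w_j - w_i > 0\), contradicting the optimality of the clairvoyant schedule \(C_t\). Therefore \(j \in C_t\). The only thing requiring (minor) care is checking that the exchange yields a feasible schedule — i.e. that \(r_j \leq t_i < d_j\) and that \(j\) was not already scheduled — and I do not expect this to be a genuine obstacle; note in particular that the full conforming property of \(C_t\) is not needed here, only that \(C_t\) is an optimal feasible schedule over the pending and future packets.
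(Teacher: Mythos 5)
Your proof is correct and is essentially the paper's own argument: the paper disposes of this fact in one line (``$C_t$ with $i$ replaced by $j$ is a feasible schedule contradicting optimality of $C_t$''), and you simply spell out the feasibility check (that $r_j \leq t \leq t_i < d_i < d_j$ and that $j$ is unmatched) that the paper leaves implicit. Your closing observation that only optimality of $C_t$, not the full conforming property, is needed is also accurate.
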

\begin{proof}
Assume for contradiction that \(j \notin C_t\). Then \(C_t\) with \(i\) replaced by \(j\)
is a feasible schedule contradicting optimality of \(C_t\).
\end{proof}
\begin{lemma}\label{lem:reordering}
Let \(C_t\) be a clairvoyant schedule conforming with an oblivious schedule~\(O_t\).
Suppose that \(e=O_t(t) \notin C_t\). Then there is a clairvoyant schedule \(C^*_t\)
obtained from \(C_t\) by reordering of packets such that \(C^*_t(t)=h\).
\end{lemma}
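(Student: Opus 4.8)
The plan is to prove the lemma in two stages: first to show that $h$ itself must be one of the packets of $C_t$, and then to obtain $C^*_t$ from $C_t$ by a single cyclic shift (a \emph{rotation}) of the packets occupying the slots between $t$ and the slot currently holding $h$.

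In Stage~1 I would locate $h$ inside $C_t$. Write $j := C_t(t)$. Since $j$ is pending at $t$ we have $r_j \le t$, so the conforming property gives $j \in O_t$; in particular $j$ is non-dominated and $w_j \le w_h$. If $w_j < w_h$, then the conforming property also rules out $h \lhd j$ (that would force $w_h < w_j$), so $j \lhd h$, and Fact~\ref{fact:adv-takes-sth-earlier} applied to the pair $j \lhd h$ (with $w_j < w_h$ and $j \in C_t$) yields $h \in C_t$. If instead $w_j = w_h$, then $j$ is itself a maximum-weight non-dominated packet, so $h \unlhd j$ by minimality of $h$; combined with the conforming property (which again forbids $h \lhd j$) this forces $h = j \in C_t$. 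Either way $h \in C_t$; let $s$ be the slot with $C_t(s) = h$. A clairvoyant schedule uses only slots $\ge t$, so $s \ge t$, and if $s = t$ we are done with $C^*_t = C_t$, so assume $s > t$. Note finally that $C_t$, being a $\unlhd$-schedule in which $h$ is pending from $t$ onward but used only at slot $s$, leaves no slot of $[t,s)$ empty.

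In Stage~2 I would define $C^*_t$ by $C^*_t(t) := h$, $C^*_t(u) := C_t(u-1)$ for $t < u \le s$, and $C^*_t(u) := C_t(u)$ otherwise. This is a reordering of $C_t$ (the set of scheduled packets is unchanged), so it suffices to check feasibility, which amounts to $d_{C_t(u)} \ge u+2$ for every $u$ with $t \le u \le s-1$. Suppose this fails, and let $u^*$ be the largest slot in $[t,s-1]$ with $d_{C_t(u^*)} = u^*+1$ (it cannot be smaller by feasibility of $C_t$). Since $u^*+1 \le s < d_h$ and deadlines are agreeable, $C_t(u^*)$ cannot have been released after $t$ (otherwise, as $h$ is pending at $t$, agreeability would give $d_h \le d_{C_t(u^*)}$). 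Hence $C_t(u^*)$ is pending at every slot of $[t,u^*]$, and since $C_t$ is a $\unlhd$-schedule, $C_t(u) \lhd C_t(u^*)$ for $t \le u < u^*$; therefore every packet in $A := \{C_t(u) : t \le u \le u^*\}$ has deadline at most $u^*+1 \le s < d_h$. By agreeability once more, each such packet is released by time $t$, so $A \subseteq C_t \cap \{q : r_q \le t\} \subseteq O_t$. Now each $q \in A$ has $r_q \le t$ and $d_q \le u^*+1$, so in the (slot-$\ge t$) schedule $O_t$ it must occupy a slot of $[t,u^*]$; as $|A| = u^*-t+1 = |[t,u^*]|$, the packets of $A$ fill exactly the slots $[t,u^*]$ of $O_t$, and in particular $e = O_t(t)$ lies in $A \subseteq C_t$ — contradicting $e \notin C_t$. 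Thus $d_{C_t(u)} \ge u+2$ throughout $[t,s-1]$, the rotation is feasible, $C^*_t$ has the same (optimal) weight as $C_t$ and is again a clairvoyant schedule, and $C^*_t(t) = h$.

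The main obstacle is Stage~2: ruling out a \emph{tight} packet (deadline exactly one more than its slot) among $C_t(t),\dots,C_t(s-1)$, which would block the rotation. This is precisely where agreeability is indispensable — without it one can build a conforming $C_t$ with $e \notin C_t$ in which $h$ and $C_t(t)$ are forced to compete for one single slot, so no reordering can put $h$ at time $t$ — and it is also where the hypothesis $e \notin C_t$ is consumed, since a tight packet would let us trap $e$ among the packets of $C_t$.
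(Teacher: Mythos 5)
Your proof is correct, and Stage~1 matches the paper. Stage~2, however, takes a genuinely different route to feasibility of moving \(h\) to the front. The paper first builds an intermediate schedule \(C'_t\) by pushing all not-yet-pending packets behind the pending ones (this is still EDF, hence feasible, precisely because of agreeability), and then observes that the packets occupying slots \(t,t+1,\ldots\) before \(h\) in \(C'_t\) all lie in \(O_t\setminus\{e\}\); since \(O_t=p_1,p_2,\ldots\) satisfies \(d_{p_i}\ge t+i\), the packet at slot \(t+a\) is some \(p_{i_a}\) with \(i_a\ge a+2\), giving the required one-step slack directly. You instead rotate \(C_t\) in place and argue by contradiction: a tight packet \(C_t(u^\ast)\) with \(d_{C_t(u^\ast)}=u^\ast+1<d_h\) must already be pending at \(t\) (agreeability), and then the \(\unlhd\)-order forces \emph{all} of \(C_t(t),\ldots,C_t(u^\ast)\) to be pending with deadlines \(\le u^\ast+1\); these lie in \(O_t\) by the conforming property, and a counting argument over the slots \([t,u^\ast]\) of \(O_t\) traps \(e=O_t(t)\) inside \(C_t\), contradicting \(e\notin C_t\). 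Both arguments consume the same three hypotheses (agreeable deadlines, the conforming property, \(e\notin C_t\)), but the paper's version is constructive and localizes the use of agreeability to justifying the de-interleaving reordering, while yours avoids the intermediate schedule entirely at the cost of an indirect argument; your version also makes it crisply visible exactly where \(e\notin C_t\) is indispensable, which the paper only uses implicitly in the slack count.
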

\begin{proof}
Let \(j=C_t(t) \neq h\) and let \(O_t=p_1,p_2,\ldots,p_s\). Observe that
\(h \in C_t\) by Fact~\ref{fact:adv-takes-sth-earlier}.
So in particular \(e=p_1\), \(j=p_k\), and \(h=p_l\) for some \(1<k<l\leq s\).
Let \(d_i\) denote the deadline of \(p_i\) for \(1\leq i \leq s\).
Since \(O_t\) is feasible in the absence of future arrivals,
\(d_i \geq t+i\) for \(i=1,\ldots,s\).

Recall that \(p_k,p_l\in C_t\) and that there can be some further packets
\(p\in C_t\) such that \(p_k \lhd p \lhd p_l\); some of these packets may
be not pending yet. We construct a schedule \(C'_t\) by reordering \(C_t\).
Precisely, we put all the packets from \(C_t\) that are not yet pending at
\(t\) after all the packets from \(C_t\) that are already pending, keeping
the order between the pending packets and between those not yet pending.
By the agreeable deadlines property, this is an earliest deadline first order,
so \(C'_t\) is a clairvoyant schedule.

As \(e=p_1 \notin C'_t\) and \(d_i \geq t+i\) for \(i=1,\ldots,s\), all the packets
\(x \in C'_t\) preceding \(h\) in \(C'_t\) (i.e., \(x \in C'_t\) such that \(r_x\leq t\)
and \(x \lhd p_l=h\)) have \emph{slack} in \(C'_t\), i.e., each of them could also be
scheduled one step later. Hence \(h=p_l\) can be moved to the very front of \(C'_t\)
while keeping its feasibility, i.e., \(C'_t=p_k,p_{k'},\ldots,p_{l'},p_l\) can be
transformed to a clairvoyant schedule \(C^*_t=p_l,p_k,p_{k'},\ldots,p_{l'}\).
The reordering is illustrated in Figure~\ref{fig:reorder}.
\begin{figure}[h]
  \centering
    \includegraphics[width=0.85\textwidth]{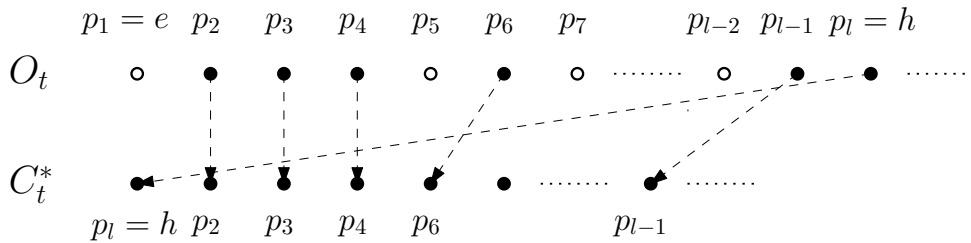}
    \caption{Construction of the schedule \(C^*_t\). Packets are represented by circles:
    	the ones included in \(C_t\) (\(C^*_t\)) are filled, the remaining ones are hollow.}
    \label{fig:reorder}
\end{figure}
\end{proof}


\section{Algorithms and their analyzes}

\subsection{The Algorithms}

The algorithm {\MG}~\cite{DBLP:conf/soda/LiSS05} works as follows: at the
beginning of each step \(t\) it considers the packets in the buffer and the
newly arrived packets, and calculates~\(O_t\). Then {\MG} identifies the
packets \(e\) and \(h\). If \(\phi w_e \geq w_h\), {\MG} sends \(e\).
Otherwise, it sends the \(\unlhd\)-minimal packet \(f\) such that
\(w_f \geq \phi w_e\) and \(\phi w_f \geq w_h\);
the latter exists as \(h\) itself is a valid candidate.
Our deterministic algorithm {\MG\('\)} does exactly the same with one exception:
if \(\phi w_e < w_h\), it sends \(h\) rather than \(f\). Our randomized algorithm
{\RG} also works in a similar fashion: it transmits \(e\) with probability \(\frac{w_e}{w_h}\)
and \(h\) with the remaining probability. For completeness, we provide pseudo-codes
of all three algorithms in Figure~\ref{fig:codes}.

\begin{figure}[h]
	\centering
	\begin{codebox}
	\Procname{\proc{MG} (step \(t\))}
	\zi	\(O_t \gets\) oblivious schedule at \(t\)
	\zi	\(e \gets\) the \(\unlhd\)-minimal packet from \(O_t\)
	\zi	\(h \gets \) the \(\unlhd\)-minimal of all the heaviest packets from \(O_t\)
	\zi	\If \(\phi w_e \geq w_h\)
	\zi		\Then
					transmit \(e\)
	\zi		\Else
					\(f \gets\) the \(\unlhd\)-minimal of all \(j \in O_t\) s.t. \(w_j \geq \phi w_e\) and \(\phi w_j \geq w_h\)
	\zi			transmit \(f\)
			\End
	\end{codebox}
	\begin{codebox}
	\Procname{\proc{MG}\('\) (step \(t\))}
	\zi	\(O_t \gets\) oblivious schedule at \(t\)
	\zi	\(e \gets\) the \(\unlhd\)-minimal packet from \(O_t\)
	\zi	\(h \gets \) the \(\unlhd\)-minimal of all the heaviest packets from \(O_t\)
	\zi	\If \(\phi w_e \geq w_h\)
	\zi		\Then
					transmit \(e\)
	\zi		\Else
					transmit \(h\)
			\End
	\end{codebox}
	\begin{codebox}
	\Procname{\proc{RG} (step \(t\))}
	\zi	\(O_t \gets\) oblivious schedule at \(t\)
	\zi	\(e \gets\) the \(\unlhd\)-minimal packet from \(O_t\)
	\zi	\(h \gets \) the \(\unlhd\)-minimal of all the heaviest packets from \(O_t\)
	\zi	transmit \(e\) with probability \(\frac{w_e}{w_h}\) and \(h\) with probability \(1-\frac{w_e}{w_h}\)
	\end{codebox}
	\caption{The three algorithms}
	\label{fig:codes}
\end{figure}

\subsection{Analysis Idea}

The analysis of Li~et~al.~\cite{DBLP:conf/soda/LiSS05} uses the following idea:
in each step, after both {\MG} and {\ADV} transmitted their packets,
modify {\ADV}'s buffer in such a way that it remains the same as
{\MG}'s and that this change can only improve {\ADV}'s gain, both in this
step and in the future. Sometimes {\ADV}'s schedule is also modified to
achieve this goal, specifically, the packets in it may be reordered, and {\ADV}
may sometimes be allowed to transmit two packets in a single step.
It is proved that in each such step the ratio of {\ADV}'s to {\MG}'s gain is
at most \(\phi\). As was already noticed by Li~et~al.~\cite{DBLP:conf/soda/LiSS05},
this is essentially a potential function argument. To simplify the analysis, it is
assumed (wlog) that {\ADV} transmits its packets in the \(\unlhd\) order.

Our analysis follows the outline of the one by Li~et~al., but we make it more
formal. Observe that there may be multiple clairvoyant schedules, and that
{\ADV} can transmit \(C_t(t)\) at every step \(t\), where \(C_t\) is a clairvoyant
schedule chosen arbitrarily in step \(t\). As our algorithms \(\MG'\) and {\RG}
determine the oblivious schedule \(O_t\) at each step, we assume that every \(C_t\)
is a clairvoyant schedule conforming with \(O_t\).

There is one exception though. Sometimes, when a reordering in \(C_t\) does not hinder
{\ADV}'s performance (taking future arrivals into account), we ``force'' {\ADV} to
follow the reordered schedule. This is the situation described in
Lemma~\ref{lem:reordering}: when \(e \notin C_t\), there is a clairvoyant
schedule \(C^*_t\) such that \(h=C^*_t(t)\). In such case we may assume that {\ADV}
follows \(C^*_t\) rather than \(C_t\), i.e., that it transmits \(h\) at \(t\). Indeed,
we make that assumption whenever our algorithm (either {\MG}\('\) or {\RG}) transmits
\(h\) at such step: then {\ADV} and {\MG}\('\) (\RG) transmit the same packet,
which greatly simplifies the analysis.

Our analysis of \(\MG'\) is essentially the same as the original analysis of {\MG} by
Li~et~al.~\cite{DBLP:conf/soda/LiSS05}, but lacks one case which is superfluous due to
our modification. As our algorithm \(\MG'\) always transmits either \(e\) or \(h\),
and the packet \(j\) that {\ADV} transmits always satisfies \(j \unlhd h\) by
definition of the clairvoyant schedule	conforming with \(O_t\), the case which {\MG}
transmits \(f\) such that \(e \lhd f \lhd j\) does not occur to \(\MG'\).
The same observation applies to {\RG}, whose analysis also follows the ideas of Li~et~al.

\subsection{Analysis of the Deterministic Algorithm}

We analyze this algorithm as mentioned before, i.e., assuming (wlog) that at
every step \(t\) {\ADV} transmits \(C_t(t)\), where \(C_t\) is a clairvoyant
schedule conforming with \(O_t\).
%
\begin{theorem}\label{th: semi-old-result}
{\MG\('\)} is \(\phi\)-competitive on sequences with agreeable deadlines.
\end{theorem}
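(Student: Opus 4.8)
The plan is to formalise the amortised argument outlined in the Analysis Idea: maintain an invariant that couples the buffers of \(\MG'\) and \(\ADV\), together with a per-step inequality between their gains. Concretely, I would prove by induction on the step \(t\) that, right before step \(t\) (after that step's arrivals and expirations), the buffers of \(\MG'\) and \(\ADV\) are identical, say equal to a set \(B\), and
\[G^{\ADV}_{<t} + \OPT(B) \geq \GOPT(I)\enspace,\]
where \(G^{\ADV}_{<t}\) is the total weight credited to \(\ADV\) so far and \(\OPT(B)\) is the maximum weight of a feasible schedule over the packets of \(B\) and all future arrivals, using time slots \(\geq t\). The base case (empty buffers, \(\OPT(\emptyset)=\GOPT(I)\)) is trivial, and at the very end (empty buffers again) the invariant gives \(\sum_t a_t = G^{\ADV}_{\mathrm{tot}} \geq \GOPT(I)\), where \(a_t\) is the weight credited to \(\ADV\) in step \(t\). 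Combined with a per-step bound \(a_t \leq \phi\, m_t\), where \(m_t\) is the weight \(\MG'\) transmits in step \(t\), this yields \(\GOPT(I) \leq \sum_t a_t \leq \phi \sum_t m_t = \phi\,\GALG(I)\), i.e.\ \(\phi\)-competitiveness.

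For the inductive step I would fix a clairvoyant schedule \(C_t\) conforming with \(O_t\) of maximum weight \(\OPT(B)\) — it exists by Fact~\ref{fact:adv-takes-not-too-late} — and split into cases according to the packet \(m \in \{e,h\}\) transmitted by \(\MG'\) and the packet \(j = C_t(t)\) transmitted by \(\ADV\); note that \(j \in O_t\) and, by conformance, \(e \unlhd j \unlhd h\). If \(m = j\), both delete the same packet, the tail of \(C_t\) restores the invariant, and \(a_t = m_t\). If \(m = e\) (so \(\phi w_e \geq w_h\)) and \(j \neq e\), then \(e \lhd j\), hence \(e \notin C_t\) because \(C_t\) is a \(\unlhd\)-schedule; so I may delete \(e\) from \(\ADV\)'s buffer at no cost and re-insert \(j\), restoring buffer equality, while the tail of \(C_t\) (which avoids \(e\)) still witnesses the invariant, and the ratio is \(w_j/w_e \leq w_h/w_e \leq \phi\).

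The remaining case, \(m = h\) (so \(\phi w_e < w_h\)) and \(j \neq h\), is where the real work lies and is the step I expect to be the main obstacle. First, \(e \notin C_t\) is impossible here: by the forcing convention, whenever \(\MG'\) sends \(h\) and \(e \notin C_t\) we reorder \(C_t\) via Lemma~\ref{lem:reordering} so that \(\ADV\) also sends \(h\), contradicting \(j \neq h\). Thus \(e \in C_t\), and since \(C_t\) is a \(\unlhd\)-schedule and \(e\) is \(\unlhd\)-minimal in \(O_t\), necessarily \(j = e\). Now \(w_e < w_h\), \(e \lhd h\) and \(e \in C_t\), so Fact~\ref{fact:adv-takes-sth-earlier} gives \(h \in C_t\). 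I would then credit \(\ADV\) with \(w_e + w_h\) in this step (\(w_e\) for actually sending \(e\), and \(w_h\) on account of the copy of \(h\) it was going to send at some later step), remove \(h\) both from \(\ADV\)'s buffer and from the tail of \(C_t\), and re-insert \(e\); what is left of \(C_t\) after step \(t\), with \(h\) deleted, is feasible over the reconciled buffer \(B \setminus \{h\}\) and has weight at least \(\OPT(B) - w_e - w_h\), so the invariant survives. The per-step ratio is \((w_e + w_h)/w_h = 1 + w_e/w_h\), which is at most \(\phi\) precisely because \(\MG'\) preferred \(h\), i.e.\ \(\phi w_e < w_h\), so \(w_e/w_h < 1/\phi = \phi - 1\).

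Summing \(a_t \leq \phi\, m_t\) over all steps and invoking the final invariant then finishes the proof. The genuinely delicate points throughout are: checking in every case that each modification of \(\ADV\)'s buffer and schedule can only help it — which is exactly what Facts~\ref{fact:O-contains-C}, \ref{fact:adv-takes-not-too-late} and \ref{fact:adv-takes-sth-earlier} and Lemma~\ref{lem:reordering} were set up to guarantee — and keeping the ``\(e \in C_t\) versus \(e \notin C_t\)'' dichotomy consistent with exactly when the forced reordering of \(C_t\) is invoked. The ratio is tight in the case \(m = e\) with \(w_j = w_h = \phi w_e\), so no constant smaller than \(\phi\) can come out of this scheme.
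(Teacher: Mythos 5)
Your proposal is correct and essentially reproduces the paper's argument: the same trivial case (\(\MG'\) and \(\ADV\) transmit the same packet), the same case \(\phi w_e \geq w_h\) with \(\ADV\) sending \(j\neq e\) giving ratio \(w_j/w_e \leq w_h/w_e \leq \phi\), the same case \(\phi w_e < w_h\) with \(\ADV\) sending \(e\) and being credited \(w_e + w_h\) via Fact~\ref{fact:adv-takes-sth-earlier} giving ratio \(1 + w_e/w_h < \phi\), and Lemma~\ref{lem:reordering} used to force \(\ADV\) onto \(h\) when \(e\notin C_t\). The only differences are cosmetic: you make the potential/invariant explicit (the paper only remarks it ``is essentially a potential function argument'' and reasons informally about modifications that ``can only improve \(\ADV\)'s gain''), and you apply the forcing convention up front so that the paper's third case (\(\MG'\) sends \(h\), \(\ADV\) sends \(j\) with \(e\lhd j \lhd h\)) becomes vacuous rather than being handled as a separate sub-case — a slightly cleaner organization of the same logic.
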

\begin{proof}
Note that whenever \(MG'\) and {\ADV} transmit the same packet, clearly their gains
are the same, as are their buffers right after such step. In particular this happens
when \(e=h\) as then \(\MG'\) transmits \(e=h\) and {\ADV} does the same:
in such case \(h\) is both the heaviest packet and the \(\unlhd\)-minimal
non-dominated packet, so \(h = C_t(t)\) by definition of the clairvoyant schedule
conforming with \(O_t\).

 In what follows we inspect the three remaining cases.
\paragraph{\boldmath\(\phi w_e \geq w_h\colon \MG' \text{ transmits } e. \ \ADV \text{ transmits } j \neq e.\)}
	To make the buffers of {\MG\('\)} and {\ADV} identical right after this step,
	we replace \(e\) in {\ADV}'s buffer by~\(j\).
	This is advantageous for {\ADV} as \(d_j \geq d_e\) and \(w_j \geq w_e\) follows from \(e \unlhd j\)
	and the definition of a clairvoyant schedule conforming with \(O_t\).
	As \(\phi w_e \geq w_h\), the ratio of gains is
	\[
	  \frac{w_j}{w_e} \leq \frac{w_h}{w_e} \leq \phi \enspace.
	\]
\paragraph{\boldmath\(\phi w_e < w_h\colon \MG' \text{ transmits } h. \ \ADV \text{ transmits } e.\)}
	Note that {\ADV}'s clairvoyant schedule from this step contains \(h\) by
	Fact~\ref{fact:adv-takes-sth-earlier}.
	We let {\ADV} transmit 	both \(e\) and \(h\) in this step and keep \(e\) in its buffer,
	making it identical to the buffer of {\MG\('\)}. Keeping \(e\), as well as transmitting
	two packets at a time is clearly advantageous for {\ADV}. As \(\phi w_e < w_h\), the ratio
	of gains is
	\[
		\frac{w_e+w_h}{w_h} \leq \frac{1}{\phi} + 1 = \phi \enspace.
	\]
\paragraph{\boldmath\(\phi w_e < w_h\colon \MG' \text{ transmits } h. \ \ADV \text{ transmits } j \neq e.\)}
	Note that \(j \unlhd h\): by definition of the clairvoyant schedule
	conforming with \(O_t\), for every \(i \in O_t\) such that
	\(i \lhd j\), \(w_i < w_j\) holds.
	
	There are two cases: either \(j=h\), or \(w_j < w_h\) and \(d_j < d_h\). In the former one both
	players do the same and end up with identical buffers. Thus we focus on the latter case.
	Fact~\ref{fact:adv-takes-sth-earlier} implies that \(h \in C_t\). By Lemma~\ref{lem:reordering},
	\(C_t\) remains feasible when \(h\) is moved to its very beginning. Hence we assume that {\ADV}
	transmits \(h\) in the current step. As this is the packet that {\MG\('\)} sends, the gains of
	{\ADV} and \(\MG'\) are the same and no changes need be made to {\ADV}'s buffer.
\end{proof}

\subsection{Analysis of the Randomized Algorithm}

We analyze this algorithm as mentioned before, i.e., assuming (wlog) that at
every step \(t\) {\ADV} transmits \(C_t(t)\), where \(C_t\) is a clairvoyant
schedule conforming with \(O_t\).
%
\begin{theorem}\label{th: result}
{\RG} is \(\frac{4}{3}\)-competitive against oblivious adversary on sequences
with agreeable deadlines.
\end{theorem}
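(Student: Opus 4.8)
The plan is to mirror the structure of the proof of Theorem~\ref{th: semi-old-result}, using the same potential-function-style argument: after each step, modify {\ADV}'s buffer (and possibly its schedule, allowing it occasionally to transmit two packets) so that it coincides with {\RG}'s buffer in expectation, in a way that only helps {\ADV}, and check that the expected ratio of {\ADV}'s gain to {\RG}'s gain in that step is at most $\frac{4}{3}$. The essential difference is that {\RG}'s move is random, so the ``buffers coincide'' bookkeeping must be done in expectation, and the amortized inequality becomes a statement about expected gains. As before, we may assume {\ADV} transmits $C_t(t)$ for a clairvoyant schedule $C_t$ conforming with $O_t$, and we invoke Lemma~\ref{lem:reordering} to force {\ADV} to transmit $h$ whenever {\RG} happens to transmit $h$ and $e \notin C_t$.

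First I would dispose of the trivial case $e = h$: then {\RG} transmits $e=h$ deterministically, {\ADV} transmits $e=h=C_t(t)$, and both buffers and gains coincide. Assuming $e \neq h$ (so $w_e < w_h$ and $e \lhd h$), {\RG} transmits $e$ with probability $p := w_e/w_h$ and $h$ with probability $1-p$; note $p w_e + (1-p) w_h$ is {\RG}'s expected gain, and a short calculation gives this equals $2 w_e - w_e^2/w_h$. Now split on what {\ADV} transmits, following the case analysis of Theorem~\ref{th: semi-old-result}. \textbf{Case $j = e$.} By Fact~\ref{fact:adv-takes-sth-earlier}, $h \in C_t$. On the event {\RG} sends $h$, we mimic the deterministic argument: let {\ADV} also send $e$ in this step (two packets) and keep nothing extra; on the event {\RG} sends $e$, both send $e$ and buffers agree. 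The expected gain of {\ADV} this step is $w_e$ (it always gets $w_e$, plus $w_h$ on the event {\RG} sent $h$, minus the $w_h$ that {\RG}'s buffer no longer has... the accounting must be set up so that in both sub-events the two buffers agree afterwards). \textbf{Case $j = h$, i.e. $w_j<w_h$, $d_j<d_h$ or $j=h$.} If $j=h$ and {\RG} also happens to send $h$, buffers agree; if {\RG} sends $e$, replace $e$ by $h$ in {\ADV}'s buffer. If $w_j < w_h$, use Lemma~\ref{lem:reordering} on the event {\RG} sends $h$ to make {\ADV} transmit $h$ too. \textbf{Case $\phi w_e \ge w_h$ vs.\ $\phi w_e < w_h$} no longer governs {\RG}'s behaviour, so these merge; the remaining possibility is $j$ with $e \lhd j \lhd h$ only if $w_j \ge w_e$ — but the conforming property forces $j \unlhd h$ with $w_j \ge w_e$, $d_j \ge d_e$ whenever $j \neq e$, so replacing $e$ by $j$ in {\ADV}'s buffer on the event {\RG} sent $e$ is always advantageous.

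The bookkeeping must be arranged so that, after the step, {\ADV}'s buffer is a fixed set independent of {\RG}'s coin flip (so the argument can be iterated). The natural device is: on the event {\RG} transmits $e$, remove $e$ from {\ADV}'s buffer (it just sent $e$ or $j$; swap to make the sent packet be $h$ or handle the replacement); on the event {\RG} transmits $h$, either {\ADV} also sends $h$ (via Lemma~\ref{lem:reordering} or Fact~\ref{fact:adv-takes-sth-earlier}) or {\ADV} sends $e$ and additionally $h$. In every branch the resulting buffer should be ``{\RG}'s common buffer with $h$ removed,'' and one then checks {\ADV}'s expected gain is at most $\frac{4}{3}$ times $(2w_e - w_e^2/w_h)$. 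Writing $x := w_e/w_h \in (0,1)$, the worst case should reduce to verifying an inequality like $\frac{w_h + \text{(something)}}{2w_e - w_e^2/w_h} \le \frac{4}{3}$, equivalently a one-variable inequality in $x$ which is maximized at $x = 1/2$ giving exactly $\frac{4}{3}$; this is the source of the constant, just as $\phi$ arose from $\frac1\phi + 1 = \phi$.

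The main obstacle I anticipate is getting the two stochastic branches to reconcile to a single, flip-independent buffer \emph{and} simultaneously to a single flip-independent clairvoyant/oblivious schedule for the next step — the deterministic proof hides this because {\ADV}'s and {\MG}'s moves are both deterministic. One must be careful that the use of Lemma~\ref{lem:reordering} (forcing {\ADV} to send $h$) is legitimate on exactly the event {\RG} sends $h$ and that on the complementary event the replacement of $e$ by $j$ (or the double transmission) leaves the \emph{same} buffer. Once the coupling is set up correctly, the competitive inequality itself is the routine one-variable optimization over $x = w_e/w_h$ sketched above, whose maximum $\frac43$ at $x=\frac12$ gives the claimed ratio.
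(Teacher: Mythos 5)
Your high-level plan — couple a modified adversary to \(\RG\)'s random buffer, make all modifications advantageous for \(\ADV\), and check a per-step expected-gain ratio — is the right one and matches the paper's. But three things go wrong.

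First, an arithmetic error: \(\RG\) transmits \(e\) with probability \(p=w_e/w_h\), so its expected gain is
\(p\,w_e+(1-p)\,w_h = \tfrac{w_e^2}{w_h}-w_e+w_h = \tfrac{1}{w_h}\bigl(w_e^2-w_ew_h+w_h^2\bigr)\),
not \(2w_e-w_e^2/w_h\) (you swapped the roles of \(e\) and \(h\)). Completing the square gives \(\GRG\geq\tfrac34 w_h\), with equality at \(w_e=w_h/2\); this is where the constant \(\tfrac43\) comes from, and it is a clean lower bound, not a one-variable optimization left to verify at the end.

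Second, you never identify the matching upper bound on \(\ADV\)'s side, which is the crux of the paper's proof: in \emph{every} case, after the (advantageous) modifications, the modified adversary's expected gain this step is exactly (or at most) \(w_h\). In the case \(\ADV\) sends \(e\), this is \(\tfrac{w_e}{w_h}w_e+(1-\tfrac{w_e}{w_h})(w_e+w_h)=w_h\); in the case \(\ADV\) sends \(j\neq e\), on the branch \(\RG\) sends \(e\) you swap \(j\) for \(e\) in \(\ADV\)'s buffer and charge \(w_j\leq w_h\), and on the branch \(\RG\) sends \(h\) you invoke Lemma~\ref{lem:reordering} to force \(\ADV\) to send \(h\) too, giving gain \(w_h\). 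Pairing the upper bound \(w_h\) with \(\GRG\geq\tfrac34 w_h\) immediately yields the ratio \(\tfrac43\); there is no leftover inequality in \(x\) to maximize.

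Third, the ``main obstacle'' you anticipate — reconciling the two stochastic branches to a single \emph{flip-independent} buffer — is a misconception, and if you insisted on it the proof would stall. The paper's device is to keep the modified adversary's buffer equal to \(\RG\)'s buffer \emph{in every realization}: if \(\RG\) sends \(e\), the coupled buffer lacks \(e\); if \(\RG\) sends \(h\), it lacks \(h\). This is flip-dependent, and that is fine, because the modified adversary is a bookkeeping device that is allowed to depend on \(\RG\)'s coins; the chain of ``advantageous'' modifications guarantees its (random) total gain pointwise dominates the true oblivious adversary's gain, and linearity of expectation then lets you sum the per-step bounds. As written, your sketch has the correct skeleton but the wrong algebra, lacks the key ``\(\E[\GADV]\leq w_h\)'' observation that closes each case, and chases a coupling invariant (flip-independence) that neither holds nor is needed.
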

\begin{proof}
Observe that if \(e=h\), then {\RG} transmits \(e=h\) and {\ADV} does the same:
as in such case \(h\) is both the heaviest packet and the \(\unlhd\)-minimal
non-dominated packet, \(h = C_t(t)\) by definition of the clairvoyant schedule
conforming with \(O_t\). In such case the gains of {\RG} and {\ADV} 
are clearly the same, as are their buffers right after step \(t\).
Thus we assume \(e \neq h\) from now on.

Let us first bound the algorithm's expected gain in one step. It equals
\begin{align}\label{eq: galg}
\GRG 	&= \frac{w_e}{w_h} \cdot w_e + \left(1-\frac{w_e}{w_h}\right)\cdot w_h \nonumber\\
		&= \frac{1}{w_h}\left(w_e^2-w_ew_h+w_h^2\right) \nonumber\\
		&= \frac{1}{w_h}\left(\left(w_e-\frac{w_h}{2}\right)^2+\frac{3}{4}w_h^2\right) \nonumber\\
		&\geq \frac{3}{4}w_h \enspace.
\end{align}

Now we describe the changes to {\ADV}'s scheduling policy and buffer
in the given step. These make {\ADV}'s {\RG}'s buffers identical,
and, furthermore, make the expected gain of the adversary equal
exactly $w_h$. This, together with~\eqref{eq: galg} yields the
desired bound. To this end we consider cases depending on {\ADV}'s choice.
\begin{enumerate}
\item {\ADV} transmits \(e\).
	Note that {\ADV}'s clairvoyant schedule from this step
	contains \(h\) by Fact~\ref{fact:adv-takes-sth-earlier}.
	
	If {\RG} transmits \(e\), which it does with probability
	\(\frac{w_e}{w_h}\), both players gain \(w_e\) and no
	changes are required.
	
	Otherwise {\RG} transmits \(h\), and we let {\ADV} transmit
	both \(e\) and \(h\) in this step and keep \(e\) in its buffer,
	making it identical to {\RG}'s buffer. Keeping \(e\), as well
	as transmitting two packets at a time is clearly advantageous
	for {\ADV}.
	
	Thus in this case the adversary's expected gain is
	\[
	\GADV
		= \frac{w_e}{w_h} \cdot w_e +
			\left(1-\frac{w_e}{w_h}\right)\left(w_e+w_h\right)
		= w_e + (w_h - w_e)
		= w_h \enspace .
	\]
\item {\ADV} transmits \(j \neq e\).
	Note that \(j \unlhd h\): by definition of the clairvoyant schedule
	conforming with \(O_t\), for every \(i \in O_t\) such that
	\(i \lhd j\), \(w_i < w_j\) holds.
	
	If {\RG} sends \(e\), which it does with probability
	\(\frac{w_e}{w_h}\), we simply replace \(e\) in {\ADV}'s buffer by
	\(j\). This is advantageous for {\ADV} as \(w_j > w_e\) and \(d_j > d_e\)
	follow from \(e \lhd j\) and the definition of the clairvoyant schedule
	conforming with \(O_t\).
	
	Otherwise {\RG} sends \(h\), and we claim that (wlog) {\ADV} does the same.
	Suppose that \(j \neq h\), which implies that \(w_j < w_h\) and \(d_j < d_h\).
	Then \(h \in C_t\), by Fact~\ref{fact:adv-takes-sth-earlier}.
	Thus, by Lemma~\ref{lem:reordering},
	\(C_t\) remains feasible when \(h\) is moved to its very beginning.
	Hence we assume that {\ADV} transmits \(h\) in the current step.
	No further changes need be made to {\ADV}'s buffer as {\RG} also
	sends \(h\).
	
	Thus in this case the adversary's expected gain is $w_h$.\qedhere
\end{enumerate}
\end{proof}

\section{Conclusion and Open Problems}

We have shown that, as long as the adversary is oblivious, the ideas
of Li~et~al.~\cite{DBLP:conf/soda/LiSS05} can be applied to randomized algorithms,
and devised a \(\frac{4}{3}\)-competitive algorithm this way. However,
the gap between the \(\frac{5}{4}\) lower bound and our \(\frac{4}{3}\)
upper bound remains.

Some parts of our analysis hold even in the adaptive
adversary model~\cite{DBLP:journals/corr/abs-0907-2050}.
On the other hand, other parts do not extend to adaptive adversary
model, since in general such adversary's schedule is a random variable
depending on the algorithm's random choices. Therefore it is not possible to
assume that this ``schedule'' is ordered by deadlines,
let alone perform reordering like the one in proof of Lemma~\ref{lem:reordering}.

This makes bridging either the \(\left[\frac{5}{4},\frac{4}{3}\right]\) gap in the
oblivious adversary model, or the \(\left[\frac{4}{3},\frac{e}{e-1}\right]\) gap
in the adaptive adversary model all the more interesting.

\subsection*{Acknowledgements}

I would like to thank my brother, Artur Je\.z, for numerous comments on the draft of this paper.

\bibliographystyle{abbrv} 
\bibliography{jez-lukasz}

\newpage
\strut

\end{document}